\documentclass{article}
\usepackage{kr}

\usepackage{times}
\usepackage{soul}
\usepackage{url}
\usepackage[hidelinks]{hyperref}
\usepackage[utf8]{inputenc}
\usepackage[small]{caption}
\usepackage{graphicx}
\usepackage{amsmath}
\usepackage{amsthm}
\usepackage{amsfonts}
\usepackage{booktabs}
\usepackage{algorithm}
\usepackage{algorithmic}
\usepackage{natbib}
\usepackage{multirow}
\newtheorem{definition}{Definition}
\newtheorem{proposition}{Proposition}

\title{A Probabilistic Framework for Hierarchical Goal Recognition}

\author{%
Chenyuan Zhang$^1$\footnote{These authors contributed equally.}\and
 Katherine Ip$^2$\footnotemark[1]\and
Hamid Rezatofighi$^1$\and
Buser Say$^1$\and
Mor Vered$^1$
\\
\affiliations
$^1$Monash University\\
$^2$The University of Melbourne
\emails
 \{chenyuan.zhang, hamid.rezatofighi, buser.say, mor.vered\}@monash.edu\\
     katherine.ip@student.unimelb.edu.au
}

\begin{document}

\maketitle

\begin{abstract}
  Goal recognition aims to infer an agent’s goal from observations of its behaviour. In realistic settings, recognition can benefit from exploiting hierarchical task structure and reasoning under uncertainty. Planning-based goal recognition has made substantial progress over the past decade, but to the best of our knowledge no existing approach jointly integrates hierarchical task structure with probabilistic inference. In this paper, we introduce the first planning-based probabilistic framework for hierarchical goal recognition over Hierarchical Task Networks (HTNs). We instantiate the framework by exploiting an HTN planner with a three-stage generative model for likelihood estimation, yielding posterior distributions over goal hypotheses. Empirical results show improved recognition performance over the existing HTN-based recognizer on HTN benchmarks. Overall, the framework lays a foundation for probabilistic goal recognition grounded in hierarchical planning structure, moving goal recognition toward more practical settings.
\end{abstract}

\section{Introduction}

Goal recognition is a well-established problem concerned with identifying an agent’s goal from observations of its behaviour~\citep{baker2009action, geib2009probabilistic, masters2021s}. Humans are particularly adept at this by exploiting the hierarchical structure in activity. For example, in a cooperative cooking task, a participant typically does not track every primitive action of a teammate, but instead organizes behaviour into higher-level activities such as “preparing dessert” or “making soup”, each composed of many primitive actions. This kind of hierarchical reasoning can reduce cognitive load and helps infer goals from partial (and possibly noisy) observations. It is therefore beneficial for goal recognition algorithms to exploit hierarchical structure in a similar way.

Beyond structure, real-world goal recognition also involves substantial uncertainty~\citep{zhang2025probabilistic}.
Observations may be noisy or incomplete, agents may act sub-optimally or stochastically, and early evidence is often insufficient to distinguish between competing hypotheses. This motivates probabilistic formulations that represent uncertainty over candidate hypotheses rather than committing to a single explanation.

A prominent line of work frames goal recognition as planning, following the formulation of \citet{ramirez2009plan}. These methods exploit a domain model and off-the-shelf planning algorithms to generate candidate explanations on demand that are consistent with the observations, avoiding the need to pre-enumerate and explicitly match against a fixed plan library as in earlier library-based recognizers~\citep{geib2009probabilistic, mirsky2017slim}. Despite the broad success of planning-based approaches, to the best of our knowledge there is still no unified recognizer that combines hierarchical structure with probabilistic inference. Most probabilistic planning-based methods perform Bayesian inference over goal hypotheses but assume flat action models and therefore do not exploit hierarchy~\citep{ramirez2010probabilistic, ramirez2011goal, vered2016online, zhang2023goal, masters2019cost}. \citet{holler2018plan} extend goal recognition as planning to hierarchical domains using Hierarchical Task Networks (HTNs), which couple task decompositions with state-based semantics for primitive actions. However, their recognizer remains deterministic, returning only binary accept/reject decisions for each hypothesis rather than posterior beliefs.

In this work, we aim to combine both heirarchical structures and reasoning over uncertainty in a single framework by providing a probabilistic formulation of hierarchical goal recognition over HTNs. We proceed in four steps. First, we formulate HTN goal recognition as Bayesian inference problem. We then describe a three-stage generative model that specifies how a goal hypothesis gives rise to the observations for likelihood estimation. To make inference practical, we develop an approximation that can be implemented with an off-the-shelf HTN planner and a top-$K$ hypothesis selection procedure. Finally, we extend the framework to tolerate exogenous actions. 

Beyond producing posterior for hypotheses, this probabilistic view offers two further advantages over the prior work. First, it enables comparison among competing hypotheses in a way that aligns with standard goal-recognition intuitions (Section~\ref{gr}). Second, it allows hypotheses to retain non-zero probability even when the observations include \emph{exogenous} actions (Section~\ref{exo}), i.e., actions not licensed by the hypothesis’s HTN decomposition, which may arise from noise observation or goal-irrelevant behaviour.

Our contributions are as follows:
\begin{itemize}
  \item We introduce the first planning-based probabilistic framework for hierarchical goal recognition over HTNs, casting HTN goal recognition as Bayesian inference and yielding posterior probabilities over goal hypotheses.

  \item We instantiate the framework with a three-stage generative process that defines a likelihood over observations, and show the framework supports intuitive ranking among competing hypotheses and tolerates exogenous actions.

  \item We develop a practical inference procedure using an off-the-shelf HTN planner, including a likelihood approximation and a \mbox{top-$k$} hypothesis selection strategy. We evaluate our method on two HTN benchmarks, showing improved recognition performance over the existing HTN-based recognizer.
\end{itemize}

\section{Preliminaries}

We first introduce the notation for goal recognition in the propositional, partially ordered HTN setting, following the formulation of \citet{holler2018plan}. To the best of our knowledge, this remains the only HTN-based approach to goal recognition, and thus provides the closest point of reference for our setting. This section is intended primarily as a reference for formal definitions, and readers may safely skip technical details at first reading. The next section introduces our extension of this setting to support Bayesian inference.

\subsubsection{States and Primitive Operators.}
There are two kinds of tasks in an HTN: \emph{compound} (abstract) tasks and \emph{primitive} tasks.
Compound tasks are decomposed into subtasks via methods, whereas primitive tasks require no further decomposition.
For example, in the Kitchen domain, a compound task such as \texttt{make-soup} may decompose into subtasks like \texttt{pick-up(onion)}, \texttt{chop(onion)}, and \texttt{cook}, and a primitive task such as \texttt{pick-up(onion)} is executed as the corresponding action that changes the state.
In this paper, we use the terms \emph{primitive task} and \emph{action} interchangeably.

Let $\mathcal{F}$ be a finite set of ground atoms. A \emph{state} is a set $s\subseteq \mathcal{F}$.
Let $\Sigma$ be a finite set of primitive action symbols. Each operator $\sigma \in \Sigma$ is a tuple
\[
\sigma = \bigl(\mathrm{pre}(\sigma),\,\mathrm{add}(\sigma),\,\mathrm{del}(\sigma)\bigr),
\mathrm{pre}(\sigma), \mathrm{add}(\sigma), \mathrm{del}(\sigma) \subseteq \mathcal{F}.
\]
Operator $\sigma$ is applicable in $s$ iff $\mathrm{pre}(\sigma) \subseteq s$, yielding
$
\gamma(s,\sigma) \,=\, \bigl(s \setminus \mathrm{del}(\sigma)\bigr) \cup \mathrm{add}(\sigma).
$

\subsubsection{Tasks, Methods and Task Networks.}
Let $NT$ be a finite set of compound task symbols; task symbols are drawn from $NT\cup\Sigma$.
A \emph{task network} is a triple $\langle T,\lambda,\prec\rangle$, where
$T$ is a finite set of task identifiers, $\lambda:T\to (NT\cup\Sigma)$ labels each identifier by a task symbol, and
$\prec\subseteq T\times T$ is a strict partial order over task identifiers specifying precedence constraints.

A \emph{method} is a pair
$
m \;=\; \bigl( X_m,\; \langle T_m,\lambda_m,\prec_m\rangle \bigr),
$
where $X_m \in NT$ is the \emph{head} (i.e., the compound task reduced by $m$), and $\langle T_m, \lambda_m, \prec_m \rangle$ is its subtask network (with labels in $NT \cup \Sigma$). Let $\mathcal{M}(X)$ denote the set of methods applicable to $X \in NT$.

\subsubsection{Decomposition and Execution.}
Given a task network $\langle T,\lambda,\prec\rangle$ and an occurrence $u\in T$ with $\lambda(u)\in NT$,
a decomposition selects some $m\in\mathcal{M}(\lambda(u))$ and replaces $u$ by a \emph{fresh copy}
of $\langle T_m,\lambda_m,\prec_m\rangle$, reconnecting predecessor/successor constraints of 
$u$ to the inserted subtasks.
A task network $\langle T,\lambda,\prec\rangle$ is \emph{primitive} iff $\forall u\in T:\ \lambda(u)\in \Sigma$.

Let $N$ and $N'$ denote task networks. We write
$ N \;\Rightarrow\; N'
$
to denote that $N'$ can be obtained from $N$ by one decomposition. We write
$
N \;\Rightarrow^*\; N'
$
iff $N'$ can be obtained from $N$ by zero or more applications of $\Rightarrow$.

A \emph{linearization} of a primitive network
$\langle T,\lambda,\prec\rangle$ is any sequence of $T$ that
respects $\prec$ (i.e., a topological ordering).
Given an initial state $s_0$, a linearization
$\pi = \langle \sigma_1,\dots,\sigma_n\rangle$ is \emph{executable} if
$\sigma_i$ is applicable in $s_{i-1}$ and $s_i=\gamma(s_{i-1},\sigma_i)$ for
all $i$. We also refer to such an executable linearization $\pi$ as an
\emph{execution}; when convenient, we use the term \emph{plan} to refer to
$\pi$ as well.

\subsubsection{HTN Planning.}
An HTN domain is
$\mathcal{D}=\langle \mathcal{F},\Sigma,NT,\mathcal{M}\rangle$, where $\mathcal{M}$ denotes the set of all methods in the domain.
An HTN planning instance is
$\mathcal{I}=\langle \mathcal{D}, s_0, N_0\rangle$, where
$s_0\subseteq\mathcal{F}$ is the initial state and
$N_0=\langle T_0,\lambda_0,\prec_0\rangle$ is the initial task network.
A \emph{solution} or \emph{explanation} to $\mathcal{I}$ is a pair $(N,\pi)$ s.t.:
\begin{itemize}
  \item $N_0 \Rightarrow^* N$ and $N$ is a primitive task network;
  \item $\pi=\langle \sigma_1,\dots,\sigma_n\rangle$ is an execution of $N$ from $s_0$.
\end{itemize}
When there is no ambiguity, we also refer to a solution $(N,\pi)$ itself as an
\emph{execution}, by slight abuse of terminology.

\subsubsection{HTN Goal Recognition.}
Let $\mathcal{D}=\langle \mathcal{F},\Sigma,NT,\mathcal{M}\rangle$ be an HTN domain, $s_0\subseteq\mathcal{F}$ the initial state,
and let $G$ be a finite set of \emph{candidate goal networks}, where each $N^{g}\in G$ (a \emph{hypothesis}) is a task network of the form
$\langle T^g,\lambda^g,\prec^g\rangle$. 
Let $\hat{o}=\langle o_1,\dots,o_m\rangle$ be a sequence of observed primitive symbols with $o_i\in\Sigma$.
An \emph{HTN goal recognition instance} is
$
\mathcal{I}_{\mathrm{GR}}=\langle \mathcal{D},\, s_0,\, G,\, \hat{o}\rangle .
$
For a candidate goal network $N^{g}\in G$, we say that $N^{g}$ \emph{explains} the observations $\hat{o}$ iff
there exists a solution $(N,\pi)$ to the planning instance $\langle \mathcal{D}, s_0, N^{g}\rangle$, such that
$\hat{o}$ is a subsequence of $\pi$ (the observations appear in order, possibly with unobserved actions in between); that is, there exists a strictly increasing embedding
$e:\{1,\dots,m\}\to\{1,\dots,k\}$ with $o_i=\sigma_{e(i)}$ for all $i$.
The \emph{solution set} of $\mathcal{I}_{\mathrm{GR}}$ is defined as 
$
\mathsf{Sol}(\mathcal{I}_{\mathrm{GR}})\;=\;\{\, N^{g}\in G \mid N^{g}\ \text{explains}\ \hat{o}\,\}.
$

In \citet{holler2018plan}, $\mathcal{I}_{\mathrm{GR}}=\langle \mathcal{D},\, s_0,\, G,\, \hat{o}\rangle$ is reduced to an HTN planning problem by
constructing a new instance $\mathcal{I}' = \bigl\langle \mathcal{D}',\ s_0,\ N_0^{\top}\bigr\rangle$ with the transformed domain $\mathcal{D}' = \langle \mathcal{F},\ \Sigma',\ NT',\ \mathcal{M}'\rangle$ by performing the following steps:
\begin{itemize}
\item Apply the \emph{observation–enforcing compilation} $\phi$
      (as in \citet{ramirez2009plan}) to the primitive symbols occurring in \(\hat{o}\),
      obtaining a new operator set \(\Sigma'\) that forces any executable plan to embed \(\hat{o}\) in order.
\item Introduce $X^{\top}\notin NT$ and set \(NT' \;=\; NT \cup \{X^{\top}\}\).
\item For each candidate goal network
      \(N^{g}=\langle T^g,\lambda^g,\prec^g\rangle\in G\),
      add a root method
      $
      m_g = \bigl(X^{\top},\, \langle T^g,\lambda^g,\prec^g\rangle \bigr)$.
      $\mathcal{M}' = \mathcal{M} \cup \{\, m_g \mid N^{g}\in G \,\}
      $. Note all origin methods in $\mathcal{M}$ that mention primitives are also relabeled by the compilation map $\phi:\Sigma\to\Sigma'$.

\item Initial network is defined as \(N_0^{\top}=\langle \{r\},\, \lambda_0,\, \emptyset\rangle\) with \(\lambda_0(r)=X^{\top}\).
\end{itemize}

\citeauthor{holler2018plan} shows that a candidate $N^{g}\in\mathsf{Sol}(\mathcal{I}_{\mathrm{GR}})$ iff there exists a solution $(N,\pi)$ to $\mathcal{I}'$
such that
$
\exists\,N_1\;\; \text{s.t.}\;\; N_0^{\top}\ \Rightarrow_{m_g}\ N_1\ \Rightarrow^*\ N .
$
However, their HTN-based recognizer returns a \emph{single} solution, even though the underlying framework may, in principle, admit many explanations. To recover the complete set of solutions using a planner that produces only one plan per call, we can instead evaluate each candidate goal network independently, at the cost of additional planner calls: for every $N^{g}\in G$, we construct a restricted planning instance $\mathcal{I}'_{g}$ in which $N^{g}$ serves as the initial task network. If $\mathcal{I}'_{g}$ admits a solution, then $N^{g}$ in included in $\mathsf{Sol}(\mathcal{I}_{\mathrm{GR}})$.

\section{Probabilistic Goal Recognition on HTN}


In the formulation above, recognition is a \emph{feasibility} check, yielding a binary accept/reject decision and therefore doesn't provide a ranking over competing goal hypotheses in the solution set. Moreover, it requires every observed action to be explainable by HTN decomposition, and thus does not allow for any exogenous actions that can arise from noise or suboptimal behaviour. For example, consider an agent whose intention is to \texttt{make-soup}. The observations begin with \texttt{pick-up(onion)} and \texttt{chop(onion)}, but the agent then performs \texttt{pick-up(plate)} before resuming the cooking steps. If \texttt{pick-up(plate)} is not licensed by any decomposition of \texttt{make-soup} in the given HTN model, the recognizer rejects the \texttt{make-soup} intention outright, even though the remainder of the behaviour strongly supports it. 
To address these limitations, we cast HTN goal recognition in a \emph{Bayesian} framework, moving beyond binary feasibility decisions to infer a posterior over goal hypotheses. This allows principled ranking and supports robustness to noisy observations and goal-irrelevant actions by assigning non-zero likelihood to imperfectly explained traces.
We begin by formalizing probabilistic goal recognition over HTNs:
\begin{definition}[Probabilistic hierarchy goal recognition (PHGR) instance]\label{def:phgr}
Let $\mathcal{D}=\langle \mathcal{F},\Sigma,NT,\mathcal{M}\rangle$ be an HTN domain and $s_0\subseteq\mathcal{F}$ the initial state.
Let $G \subseteq NT \cup \Sigma$ be a finite set of candidate goals (hypotheses),
and let $\mathrm{Prior}(N^{g})$ denote a prior probability for each hypothesis
$N^{g} \in G$ given the initial state $s_0$.
Observations are a sequence of primitive symbols
$
\hat{o}=\langle o_1,\dots,o_m\rangle, o_i\in\Sigma.
$
A \textbf{probabilistic hierarchy goal recognition (PHGR) instance} is the tuple
$
\mathcal{I}_{\mathrm{PHGR}}=\langle \mathcal{D},\, s_0,\, G,\, \, \hat{o},\, \mathrm{Prior}\rangle.
$
\end{definition}

\begin{definition}[Solution to PHGR instance]
Given $\mathcal{I}_{\mathrm{PHGR}}=\langle \mathcal{D},\, s_0,\, G,\, \, \hat{o},\,  \mathrm{Prior}\rangle$, the \textbf{solution} is the posterior distribution over $N^g\in G$
\[
P(N^{g}\mid \hat{o},s_0) \;=\; \frac{P(\hat{o}\mid N^{g}, s_0)\,\mathrm{Prior}(N^{g})}{\sum_{N^{g'}\in G} P(\hat{o}\mid N^{g'},s_0)\,\mathrm{Prior}(N^{g'})}.
\]
\end{definition}
Given $\mathrm{Prior}$, the remaining challenge is to calculate the likelihood $P(\hat{o}\mid N^{g}, s_0)$.
We therefore derive a generative model from the HTN framework to estimate this likelihood. In the remainder of this section and in the next, we assume there are no exogenous
actions; handling exogenous actions will be  discussed in 
Section~\ref{exo}.

We model the observations $\hat{o}$ generated by a hypothesis $N^g$ as arising from a three-stage generative process:
(i) \emph{decomposition} of $N^g$ to a primitive task network $N$ with probability $P(N\mid N^g)$ (this distribution does not
condition on the initial state $s_0$, as it depends only on the HTN
structure);
(ii) sampling an \emph{executable linearization} $\pi$ of $N$ with probability $P(\pi\mid N, s_0)$; and
(iii) sampling the \emph{observations} $\hat{o}$ from $\pi$ with probability $P(\hat{o}\mid \pi)$.
Equivalently, 
\[
P(\hat{o},\pi,N \mid N^g,s_0)
=
P(N\mid N^g)\,P(\pi\mid N,s_0)\,P(\hat{o}\mid \pi),
\]
which entails that $N$ depends only on $N^g$, $\pi$ depends only on $(N,s_0)$, and $\hat{o}$ depends only on $\pi$.
The likelihood is then obtained by marginalization:
\[
P(\hat{o}\mid N^g,s_0)
=
\sum_{N,\pi} P(\hat{o},\pi,N \mid N^g,s_0).
\]

Before detailing the stages, we make explicit the underlying assumption that the observed trace is generated by the process defined above. This assumption is not unique to our work, as planning-based goal recognition methods routinely rely on behavioral assumptions, such as (approximate) rationality~\citep{ramirez2009plan, masters2019cost, zhang2023goal}. This further motivates a probabilistic formulation: when the real world departs from the assumed model, probabilistic inference can degrade gracefully, instead of collapsing to failure under the hard feasibility tests used by deterministic approaches. In our case, we adopt a generative process that leverages the hierarchical structure of HTNs. The resulting conditional independences follow the HTN generative story: decompositions are determined by the task hierarchy, executability depends on the decomposed primitive network and the initial state, and the observation model depends only on the executed action sequence. This modelling choice is also broadly consistent with how humans interpret others’ behaviour~\citep{geib2009probabilistic,correa2025exploring,zhang2024human}.
As a result, the process serves both as a computational abstraction and as a reasonable approximation in many recognition scenarios. While we use this particular generative process for concreteness, the overall Bayesian framework remains applicable with alternative observation models, provided they define a corresponding likelihood.

\subsection{Stage I: Network Decomposition}
At any decomposition step, a node labeled $X\in NT$ is refined by choosing a method
$m\in\mathcal{M}(X)$. Let $c(m)\in\mathbb{R}_{\ge 0}$ be a cost score for $m$ (e.g., $c(m)=|T_m|$), and fix an inverse temperature $\beta>0$.
We use a Boltzmann (softmax) distribution following \citet{ramirez2010probabilistic}:
\[
P(m\mid X)\;=\;\frac{\exp\!\bigl(-\beta\,c(m)\bigr)}{\sum_{m'\in\mathcal{M}(X)} \exp\!\bigl(-\beta\,c(m')\bigr)}.
\]
Note that this modelling choice biases the agent toward cheaper high-level task decompositions, which aligns with evidence from human studies of hierarchical problem solving~\citep{correa2025exploring}.
For a complete decomposition sequence $N^g \Rightarrow^* N$ that reduces the hypothesis $N^g$
to a primitive network $N=\langle L,\alpha,\prec\rangle$, let $\mathsf{Steps}(N)$ be the multiset of
applied methods in that sequence, and write $X_m$ for the head of method $m$.
The decomposition probability is
\[
P(N\mid N^g)\;=\;\prod_{m\in \mathsf{Steps}(N)} P\bigl(m\mid X_m\bigr).
\]

\subsection{Stage II: Executable Linearization}\label{li}
Given $N=\langle L,\alpha,\prec\rangle$ and initial state $s_0$, we generate an \emph{executable}
sequence $\pi=\langle \sigma_1,\dots,\sigma_k\rangle$ ($k=|L|$) by iteratively choosing uniformly from the
\emph{available} actions. 

At step $t$, let $L'$ be the set of unexecuted actions in $L$. Define the available set
\[
A_t =\Bigl\{\, \ell\in L' \ \Big|\ 
(\not\exists\,\ell'\in L'\text{ s.t. } \ell'\prec \ell)\ \land\
\mathrm{pre}\bigl(\alpha(\ell)\bigr)\subseteq s_{t-1}\Bigr\}.
\]
$A_t$ contains those \emph{ready} tasks: no unexecuted predecessor (i.e., minimal w.r.t.\ $\prec$ among remaining nodes), and the primitive action $\alpha(\ell)$ is applicable in $s_{t-1}$ (its preconditions hold).
We select $\sigma_t$ uniformly from $\{\alpha(\ell): \ell\in A_t\}$ and set $s_t=\gamma(s_{t-1},\sigma_t)$.
Thus the linearization likelihood is
\[
P(\pi\mid N,s_0)
\;=\;
\prod_{t=1}^{k}
\frac{\mathbf{1}\!\left[\sigma_t \in \{\alpha(\ell):\ell\in A_t\}\right]}{\;\left|A_t\right|}\,.
\]

\subsection{Stage III: Observation Model}

Computing the observation likelihood given a plan may appear straightforward; however, it is not when the observer does not know how many actions from the plan have actually been executed when the observations $\hat{o}$ are obtained.
So we marginalize over the unknown number of executed steps:
\[
P(\hat{o}\mid \pi)\;=\;\sum_{t=0}^{|\pi|} 
\underbrace{P(\text{Execute } t \text{ actions}\mid\pi)}_{\text{progress prior}}
\underbrace{P\!\left(\hat{o}\mid \pi_{1:t}\right)}_{\text{alignment likelihood}}.
\]
The implicit assumption in many prior studies that the executed length is known
is a special case of this formulation, obtained by taking the progress prior to
be $1$ at that length and $0$ elsewhere.

When every executed primitive action is observed in order (i.e., full
observability), the alignment likelihood reduces to a pure equality test
\(
P(\hat{o}\mid \pi_{1:t}) \;=\; \mathbf{1}[\hat{o}=\pi_{1:|\hat{o}|}].
\)
Hence only the prefix of length $|\hat{o}|$ can contribute, and we obtain
\[
P(\hat{o}\mid \pi)
\;=\;
P(\text{Execute }|\hat{o}|\text{ actions}\mid\pi)\,
\mathbf{1}[\hat{o}=\pi_{1:|\hat{o}|}],
\]
because the indicator $\mathbf{1}[\hat{o}=\pi_{1:t}]$ is $0$ for all
$t\neq |\hat{o}|$.

An important consequence of this likelihood term is that, when two plans both
perfectly explain the observations under full observability, shorter plans
receive higher likelihood for any reasonable progress prior because longer plans must
distribute their execution probability over more possible execution lengths.
This matches the human preference for shorter completions reported by
\citet{zhang2024human}, but in our case the effect emerges from the likelihood
(via the progress prior) rather than from a hand-crafted non-uniform prior
over hypotheses as in their formulation.

When observations are partially available, the alignment likelihood
$P(\hat{o}\mid \pi_{1:t})$ evaluates order-preserving alignments (possibly
multiple) of $\hat{o}$ to the executed prefix $\pi_{1:t}$, assigning non-zero
probability whenever $\hat{o}$ is a subsequence of $\pi_{1:t}$ and increasing
with better matches.
It can be computed efficiently via a standard dynamic program over monotone
embeddings, with a fixed per-step detection probability.

\subsection{Likelihood $P(\hat{o}\mid N^{g}, s_0)$ Estimation}\label{gr}

Under the three-stage generative process, each hypothesis $N^g$
induces an unnormalized joint distribution
\[
\tilde P(\hat{o},\pi,N \mid N^g, s_0)
\;=\;
P(N\mid N^g)\,P(\pi\mid N,s_0)\,P(\hat{o}\mid \pi)
\]
over primitive networks $N$, executable linearizations $\pi$, and observations $\hat{o}$. By construction of our generative model, it is straightforward to verify that
$\tilde P(\pi,N \mid N^g, s_0) = 0$ iff $(N,\pi)$ is not a valid solution, and $\tilde P(\hat{o},\pi,N \mid N^g, s_0) = 0$ iff
$(N,\pi)$ is invalid or does not embed $\hat{o}$.
The probability is unnormalized in the sense that some choices in
Stage~I and Stage~II do not lead to any executable sequence from $s_0$.
Consequently, the marginal
\(
\sum_{N,\pi} \tilde P(N,\pi \mid N^g, s_0)
\)
over successful executions is not guaranteed to be $1$ (and can be
substantially smaller in some cases).
The normalized likelihood of $\hat{o}$ under $N^g$ would therefore be
\[
P(\hat{o}\mid N^g, s_0)
\;=\;
\frac{\sum_{N,\pi} \tilde P(\hat{o},\pi,N \mid N^g, s_0)}
     {\sum_{N,\pi} \tilde P(N,\pi \mid N^g, s_0)}\,,
\]
but both the numerator (a sum over all successful executions
consistent with $\hat{o}$) and the denominator (a sum over all successful executions)
are intractable to compute exactly.
Instead of enumerating all executions, we approximate each of these sums by a single
representative term as in prior work~\citep{ramirez2010probabilistic}.
Let
\[
(N^{+},\pi^{+})
\;\in\;
\arg\max_{N,\pi}
\tilde P(\hat{o},\pi,N \mid N^g, s_0)
\]
be the most probable pair that explains $\hat{o}$ (\emph{observation-consistent term}), and let
\[
(N^{\mathrm{base}},\pi^{\mathrm{base}})
\;\in\;
\arg\max_{N,\pi}
\tilde P(\pi,N \mid N^g, s_0)
\]
be the most probable pair not necessarily aligning with $\hat{o}$ (\emph{unconstrained term}).

Based on order-of-magnitude approximation, the numerator is dominated by $\tilde P(\hat{o},\pi^{+},N^{+}\mid N^g, s_0)$
and the denominator by $\tilde P(N^{\mathrm{base}},\pi^{\mathrm{base}}\mid N^g, s_0)$, yielding
\[
P(\hat{o}\mid N^g,s_0)
\;\approx\;
\frac{\tilde P(\hat{o},\pi^{+},N^{+}\mid N^g, s_0)}
     {\tilde P(N^{\mathrm{base}},\pi^{\mathrm{base}}\mid N^g, s_0)}\,.
\]
The computation of the denominator is almost identical to that of the
numerator, with the only difference that we omit Stage~III as
$P(N^{\mathrm{base}} \mid N^g)\,P(\pi^{\mathrm{base}} \mid N^{\mathrm{base}}, s_0)$. Intuitively, this ratio becomes large when there exists an observation-consistent execution $(N^{+},\pi^{+})$ that is almost as probable as the unconstrained execution $(N^{\mathrm{base}},\pi^{\mathrm{base}})$ under the HTN and also explains the observations $\hat{o}$ well.

A key consequence of this probabilistic view is that it addresses a fundamental
limitation of the existing HTN-based recognizer~\citep{holler2018plan}.
Consider an example where hypothesis $g_1$ admits two plans $p_1$ and $p_2$ and hypothesis $g_2$ admits a single plan $p_3$, with
$c(p_1) < c(p_2) < c(p_3) $, and the observations can be explained by $p_2$ and $p_3$
but not by $p_1$.
The HTN-based recognizer that minimizes plan cost over all hypotheses would then
return $g_1$ (via $p_2$), because $c(p_2) < c(p_3)$, even though explaining the
observations under $g_1$ requires deviating from its cheaper plan $p_1$.
In our formulation, by contrast, $P(\hat{o}\mid g)$ depends on the
\emph{gap} between the best unconstrained execution and the best
observation-consistent execution for each hypothesis.
In the example, $g_1$ incurs a gap between $p_1$ (its best unconstrained
plan) and $p_2$ (its best explanation of $\hat{o}$), while $g_2$ has no such
penalty because $p_3$ is both its best unconstrained and best
observation-consistent plan.
As a result, our model can assign higher posterior probability to $g_2$ than to
$g_1$, in line with the general intuition in the goal recognition community that the ``less surprising'' hypothesis should be preferred.

\section{Solving PHGR Instances with an HTN Planner}

In the previous section, we formulated PHGR and its Bayesian solution, and
showed how to approximate the likelihood $P(\hat{o}\mid N^{g}, s_0)$ using two
representative HTN executions.
In this section, we discuss how these representative executions can be computed via an off-the-shelf HTN planner, and how to solve a PHGR problem by ranking hypotheses accordingly.

\subsection{Approximating the Most Probable $(N,\pi)$ Pair}\label{approx}

After reducing likelihood estimation to two representative executions, identifying these representatives remains non-trivial.
Ideally, we want to compute the most probable observation-consistent execution $(N^{+},\pi^{+})$ that embeds $\hat{o}$ and the most probable unconstrained execution $(N^{\mathrm{base}},\pi^{\mathrm{base}})$.
In this work, we approximate these pairs by invoking an off-the-shelf HTN planner on the hypothesis network $N^g$ twice: once on the observation-enforcing compilation to embed $\hat{o}$, and once on the original instance.

In both cases, standard HTN planners return a minimum-cost plan. This is only a proxy for the MAP execution under our generative model, and the two objectives can diverge. Below we highlight two situations in which a longer execution can be more probable than a shorter one for the same hypothesis. This mismatch motivates future work on HTN planning that optimizes probability under an explicit generative model rather than plan cost alone.

\subsubsection{During Network Decomposition}

A shorter plan may traverse very ``branchy'' parts of the HTN, which can dilute its probability under Stage~I.
If a hypothesis admits many alternative methods at the top level, and subsequent decompositions again offer many choices, then the probability mass is split across many competing refinement paths. Any particular primitive network reached via this wide region can therefore receive relatively small probability.
By contrast, a longer plan may arise from a narrow, almost deterministic chain of decompositions, concentrating probability along a single refinement path and thereby yielding higher probability overall despite its greater length.

\subsubsection{During Observation Generation}

A second, more subtle effect arises in Stage~III.
Under partial observability, the likelihood $P(\hat{o}\mid \pi)$ aggregates probability over order-preserving alignments of $\hat{o}$ to executed prefixes of the plan.
Longer plans can admit more valid embeddings of $\hat{o}$ within each prefix; when the per-step detection probability is small, this additional alignment mass can outweigh the length penalty induced by the progress prior.



\subsection{\mbox{Top-$k$} Hypotheses Selection}\label{topk}

A natural next step is to apply the above likelihood calculation to each hypothesis
$N^g \in G$, obtain a posterior $P(N^g \mid \hat{o}, s_0)$, and then rank the
hypotheses accordingly.
However, this direct approach faces two major issues.
First, it requires $2|G|$ calls to the HTN planner (one constrained and one
unconstrained run per hypothesis), which becomes prohibitively expensive when
$|G|$ is large.
Second, and more seriously, HTN planners are often incomplete in practice
due to search bounds and time limits~\citep{yousefi2025good}. If the planner times
out on a hypothesis, this does not imply that no plan exists, so we cannot reliably rank it.

To mitigate these problems, we adopt a \mbox{top-$k$} strategy. The choice of 
$K$ can be adapted to the available computational resources and the characteristics of the domain.
We will identify a set of $K$ promising hypotheses and compute
posteriors only for these. 
We do so by iteratively calling the HTN planner on a transformed HTN problem
$\mathcal{I}'$ with a dummy top-level network, as in
~\citet{holler2018plan}.
The first planner call on $\mathcal{I}'$ returns a solution associated with $N^g$.
We then remove the corresponding top-level method $m_g$ to exclude this
solution and call the planner again to obtain a different solution, repeating
this process until we have collected $K$ distinct solutions or the planner
times out.
The resulting solutions correspond to different hypotheses with the shortest
observation-consistent paths.
As discussed previously, they are not guaranteed to be the most probable
hypotheses under our model, but we treat them as an approximation to the
\mbox{top-$k$} posterior hypotheses.

For each of these selected hypotheses, we then run the HTN planner a second
time without observational constraints to obtain the corresponding unconstrained
executions.
The observation-consistent solution and its unconstrained counterpart instantiate the
numerator and denominator respectively in our likelihood approximation, and
we compute normalized posteriors over this restricted set of hypotheses.
This procedure requires at most $2K$ planner calls and limits the impact of
planner incompleteness, since hypotheses outside the selected set never need to
be evaluated.

\section{Handling Exogenous Actions via Task Insertion}\label{exo}

We have developed an algorithm for solving PHGR instances and shown how our probabilistic framework overcomes a key limitation of existing HTN-based approaches by preferring less surprising hypotheses. We now turn to another advantage of the probabilistic view mentioned earlier: it remains well-defined even when observations contain exogenous actions, which is important in goal recognition settings where behaviour may be noisy or partially unrelated to the goal in realistic environments. To the best of our knowledge, this is the first hierarchical framework capable of handling exogenous actions.

An action is \emph{exogenous} with respect to a hypothesis if it cannot be generated by that hypothesis’s HTN decomposition. In HTN planning, \emph{task insertion}~\citep{geier2011decidability,xiao2017hierarchical} provides a standard semantics for accommodating such actions by allowing extra primitive actions to be interleaved with the decomposed plan. This captures noisy or goal-irrelevant observations, and can also model cases where actions outside the hierarchy are required to achieve the goal. We state our analysis below with respect to this task-insertion semantics. We assume that, for a candidate hypothesis, a task-insertion planner returns a solution as a pair $(N,\pi)$, where $N$ is the decomposed primitive task network and $\pi$ is an executable linearization. Under task insertion, $\pi$ may contain additional primitive actions that are not generated by any decomposition of the hypothesis. We treat exactly these additional actions as the set of exogenous actions $L_e$, obtained by taking the actions that appear in $\pi$ but are not part of the primitive network $N$. 

We first describe how to extend the three-stage generative process to account for exogenous actions. We then analyze an idealized setting with a sound and complete task-insertion planner. Finally, we discuss the practical setting in which such planner is not available.


\subsection{Extending the Generative Process for Exogenous Actions}

Under task insertion semantics, exogenous actions correspond to additional primitive actions that be inserted during linearization. Therefore, we only need to modify Stage~II in the generative process.

Given the decomposed primitive network $N=\langle L,\alpha,\prec\rangle$, an initial state $s_0$, and a set of exogenous actions $L_e$, we generate an \emph{executable} sequence $\pi=\langle \sigma_1,\dots,\sigma_k\rangle$ (where $k=|L|+|L_e|$) by iteratively selecting actions uniformly from the \emph{available set}.
At step $t$, let $L'$ denote the set of unexecuted actions in $L \cup L_e$, extending the original definition to include $L_e$. All other aspects remain unchanged from Section~\ref{li}.


\subsection{Guarantees with a Complete Task-insertion Planner}


We first state a basic \emph{posterior-support} property of the \emph{exact} Bayesian model, and then show that the same property is preserved by the \emph{approximate} algorithm proposed in the last section, assuming the underlying task-insertion planner is sound and complete.

\begin{proposition}[Exact model: posterior support]
Given a PHGR instance $\mathcal{I}_{\mathrm{PHGR}}=\langle \mathcal{D},\, s_0,\, G,\, \hat{o},\, \mathrm{Prior}\rangle$
and a hypothesis $N^g\in G$ with $\mathrm{Prior}(N^g)>0$, let
$P(\hat{o}\mid N^g,s_0)$ be defined by the \emph{exact} marginalization under the
three-stage generative model.
Then \( P(N^g \mid \hat{o}, s_0) > 0 \) iff $\exists (N,\pi)$ such that $(N,\pi)$ is a solution to $\langle \mathcal{D}, s_0, N^{g}\rangle$ and $\hat{o}$ is a subsequence of $\pi$.
\end{proposition}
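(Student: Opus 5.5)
The plan is to reduce the posterior-support claim to a statement about the likelihood, and then decompose that likelihood term by term.

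\textbf{Reduction to the likelihood.} By the definition of the solution to a PHGR instance, $P(N^g\mid\hat{o},s_0)$ is a ratio whose numerator is $P(\hat{o}\mid N^g,s_0)\,\mathrm{Prior}(N^g)$. We assume the posterior is well defined, i.e.\ the denominator is positive (otherwise the statement is vacuous). Since $\mathrm{Prior}(N^g)>0$, it then suffices to show
\[
P(\hat{o}\mid N^g,s_0)>0 \iff \exists (N,\pi) \text{ solving } \langle\mathcal{D},s_0,N^g\rangle \text{ with } \hat{o} \text{ a subsequence of } \pi .
\]
Under the exact model, $P(\hat{o}\mid N^g,s_0)$ is the ratio of $\sum_{N,\pi}\tilde P(\hat{o},\pi,N\mid N^g,s_0)$ to the normalizer $\sum_{N,\pi}\tilde P(N,\pi\mid N^g,s_0)$. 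All terms are nonnegative, so the numerator is positive iff some single term is positive. The normalizer is positive whenever any solution exists, and positivity of the numerator already forces a solution to exist. So everything hinges on characterizing when an individual term $\tilde P(\hat{o},\pi,N\mid N^g,s_0)$ is nonzero.

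\textbf{Characterizing nonzero terms.} I would check the three factors separately.
\begin{itemize}
\item $P(N\mid N^g)>0$ iff $N^g\Rightarrow^* N$ with $N$ primitive. Each applied method has Boltzmann weight $\exp(-\beta c(m))/Z>0$, and $\mathcal{M}(X)$ is finite. If $N$ is not reachable, its term is zero by construction.
\item $P(\pi\mid N,s_0)>0$ iff $\pi$ is an executable linearization of $N$. If $\pi$ is executable and respects $\prec$, then at each step $\sigma_t\in A_t$, so every factor is $1/|A_t|>0$. Conversely, an indicator fails exactly when $\pi$ violates the order or a precondition.
\end{itemize}
Together these give $\tilde P(N,\pi\mid N^g,s_0)>0$ iff $(N,\pi)$ is a solution, as already noted in Section~\ref{gr}.

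\textbf{Stage III, the main obstacle.} Here the proof depends on how the progress prior and the alignment likelihood are instantiated, so I would make both assumptions explicit.
\begin{itemize}
\item The progress prior gives positive mass to every length $t\in\{0,\dots,|\pi|\}$, or at least to $t=|\pi|$.
\item The alignment likelihood $P(\hat{o}\mid\pi_{1:t})$ is positive iff $\hat{o}$ is a subsequence of $\pi_{1:t}$. This is the stated property of the monotone-embedding dynamic program with detection probability in $(0,1)$.
\end{itemize}

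Under these assumptions the two directions go as follows. For ($\Leftarrow$), take $t=|\pi|$: the progress-prior factor is positive and $\hat{o}$ embeds in $\pi$, so $P(\hat{o}\mid\pi)>0$. For ($\Rightarrow$), some $t$ gives a positive product, so $\hat{o}$ is a subsequence of $\pi_{1:t}$ and hence of $\pi$.

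In the full-observability special case the alignment test is a prefix-equality check. There I would note that this prefix is in particular a subsequence, and that ($\Leftarrow$) needs the partial-observability model (or a prior on $t=|\hat{o}|$). I would flag this as the step requiring care.

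Combining the three stages yields the equivalence, and therefore the proposition.
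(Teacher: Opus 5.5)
Your proof is correct and follows the same route as the paper's sketch. You reduce to positivity of the likelihood, observe that it is a normalized sum of nonnegative terms $\tilde P(\hat{o},\pi,N\mid N^g,s_0)$, and characterize the nonzero terms as valid executions embedding $\hat{o}$; the paper simply asserts that characterization ``by construction'', while you check it stage by stage and state the Stage~III assumptions the paper leaves implicit.

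Your full-observability caveat is well taken, with one correction. Under the prefix-equality test the ($\Leftarrow$) direction genuinely fails whenever $\hat{o}$ embeds only as a non-prefix subsequence, so positive progress-prior mass at $t=|\hat{o}|$ is not an alternative fix. The statement must be read with the subsequence-based (partial-observation) alignment model, as the paper's sketch implicitly does.
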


\begin{proof}[Proof Sketch]
    Since $\mathrm{Prior}(N^g)>0$, we have $P(N^g \mid \hat{o}, s_0)>0$ iff
$P(\hat{o}\mid N^g,s_0)>0$.
Under the exact model, $P(\hat{o}\mid N^g,s_0)$ is a normalized sum of nonnegative
terms $\tilde P(\hat{o},\pi,N\mid N^g,s_0)$ over $(N,\pi)$.
By construction of the generative process, $\tilde P(\hat{o},\pi,N\mid N^g,s_0)>0$
holds exactly for valid executable pairs $(N,\pi)$ that embed $\hat{o}$.
Therefore the sum is positive iff at least one such pair exists. 
\end{proof}

\begin{proposition}[Approximate model: posterior support under a sound and complete task-insertion planner]
    Given a PHGR instance $\mathcal{I}_{\mathrm{PHGR}}=\langle \mathcal{D},\, s_0,\, G,\, \hat{o},\, \mathrm{Prior}\rangle$
and a hypothesis $N^g\in G$ with $\mathrm{Prior}(N^g)>0$. With a sound and complete task-insertion planner, the procedure of Section~\ref{approx} is guaranteed to return a positive posterior
$P_{\mathrm{approx}}(N^g\mid \hat{o},s_0)$ iff $\exists (N,\pi)$ such that $(N,\pi)$ is a solution to $\langle \mathcal{D}, s_0, N^{g}\rangle$ and $\hat{o}$ is a subsequence of $\pi$.
\end{proposition}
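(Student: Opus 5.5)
The plan is to reduce posterior positivity to likelihood positivity, as in the proof of the exact-model proposition, and then check that each factor in the approximate likelihood is positive exactly when the planner succeeds. Since $\mathrm{Prior}(N^g)>0$, the approximate posterior
\[
P_{\mathrm{approx}}(N^g\mid\hat{o},s_0)=\frac{\hat P(\hat{o}\mid N^g,s_0)\,\mathrm{Prior}(N^g)}{\sum_{N^{g'}}\hat P(\hat{o}\mid N^{g'},s_0)\,\mathrm{Prior}(N^{g'})}
\]
is positive iff the approximate likelihood $\hat P(\hat{o}\mid N^g,s_0)$ is positive. Here $\hat P$ is the ratio
\[
\frac{\tilde P(\hat{o},\pi^{+},N^{+}\mid N^g,s_0)}{\tilde P(N^{\mathrm{base}},\pi^{\mathrm{base}}\mid N^g,s_0)},
\]
with the two representative pairs returned by the planner calls of Section~\ref{approx}. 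If the planner finds no observation-consistent solution, we adopt the convention that this likelihood is $0$. So it suffices to show that this ratio is well defined and positive iff an observation-consistent solution exists. The denominator of the posterior is then also positive, so no $0/0$ arises.

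For the forward direction, assume some solution $(N,\pi)$ of $\langle\mathcal{D},s_0,N^g\rangle$ embeds $\hat{o}$. This solution is also a solution of the observation-enforcing compiled instance, after relabeling by $\phi$ and inserting nothing. By completeness, the constrained call returns some $(N^+,\pi^+)$. By soundness, $(N^+,\pi^+)$ is a valid execution that embeds $\hat{o}$.

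I then verify that $\tilde P(\hat{o},\pi^+,N^+\mid N^g,s_0)>0$ factor by factor:
\begin{itemize}
\item Each Boltzmann method probability is strictly positive, so $P(N^+\mid N^g)>0$.
\item Executability of $\pi^+$ means that at every step the chosen action lies in the available set $A_t$. This uses the extended Stage~II, with the inserted actions placed in $L_e$. Since $A_t$ is nonempty and finite, each factor is $1/|A_t|>0$.
\item For Stage~III, we need the progress prior to give positive mass to some $t\ge|\hat{o}|$ at which $\hat{o}$ embeds in $\pi^+_{1:t}$. We also need the alignment likelihood to be positive whenever $\hat{o}$ is a subsequence of that prefix, as stated in the Stage~III description. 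I will make explicit the assumption of a full-support progress prior.
\end{itemize}

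The denominator is handled the same way. Because a solution exists at all, completeness ensures the unconstrained call returns some $(N^{\mathrm{base}},\pi^{\mathrm{base}})$, and by the same Stage~I/II argument its probability is positive and finite. Hence the ratio is positive.

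For the converse, suppose no solution embeds $\hat{o}$. By soundness, the constrained call cannot return an invalid pair, since anything it returns would be a solution embedding $\hat{o}$. The planner therefore reports failure, and the numerator term is $0$, either by convention or because by the exact-model proposition every $\tilde P(\hat{o},\pi,N\mid N^g,s_0)$ vanishes. Hence $P_{\mathrm{approx}}=0$.

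The main obstacle is the interplay with task insertion. The proposition's condition refers to solutions of $\langle\mathcal{D},s_0,N^g\rangle$, while the planner may return pairs with exogenous actions. I would read "solution" here in the task-insertion sense, consistent with the section. I must then check that the extended Stage~II still assigns positive probability to any such executable linearization of $L\cup L_e$; this holds by the same available-set argument. A secondary point is guarding the denominator against zero. This follows because any observation-consistent solution is itself an unconstrained solution, so the unconstrained call cannot fail whenever the constrained one succeeds.
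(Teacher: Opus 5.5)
Your proof is correct and matches the paper's argument. Both reduce posterior positivity to positivity of the likelihood ratio. Both use completeness to obtain the two representative pairs, each with strictly positive Stage I--III factors, and both use soundness for the other direction; the paper argues it directly from positivity of the numerator, while you argue the contrapositive. Your explicit side conditions are assumptions the paper's sketch leaves implicit:
\begin{itemize}
\item a progress prior with positive mass at a suitable length;
\item an alignment likelihood that is positive on subsequences, i.e.\ the partial-observation model, since the full-observation indicator would require a prefix;
\item reading ``solution'' in the task-insertion sense;
\item the convention that a failed planner call gives zero likelihood.
\end{itemize}
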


\begin{proof}[Proof Sketch]
Since $\mathrm{Prior}(N^g)>0$, we have
$P_{\mathrm{approx}}(N^g\mid \hat{o},s_0)>0$ iff $P_{\mathrm{approx}}(\hat{o}\mid N^g,s_0)>0$.
Under Section~\ref{approx}, the estimator is
\[
P_{\mathrm{approx}}(\hat{o}\mid N^g,s_0)
=
\frac{\tilde P(\hat{o},\pi^{+},N^{+}\mid N^g,s_0)}
     {\tilde P(\pi^{\mathrm{base}},N^{\mathrm{base}}\mid N^g,s_0)},
\]
where $(N^{+},\pi^{+})$ is a planner-returned solution to the observation-enforcing compilation (when solvable) and
$(N^{\mathrm{base}},\pi^{\mathrm{base}})$ is a planner-returned solution to the original (unconstrained) instance.

\smallskip
\noindent\emph{($\Rightarrow$)} If there exists a solution $(N,\pi)$ to $\langle \mathcal{D}, s_0, N^{g}\rangle$ with $\hat{o}$ as a subsequence of $\pi$, then the observation-enforcing compilation is solvable. By completeness, the planner returns some observation-consistent solution $(N^{+},\pi^{+})$. By construction of Stages~I--II, every refinement choice and action-selection step has strictly positive probability, and since $\hat{o}\sqsubseteq \pi^{+}$ the observation model yields $P(\hat{o}\mid\pi^{+})>0$; hence the numerator is strictly positive. Moreover, since $\langle \mathcal{D}, s_0, N^{g}\rangle$ is solvable, completeness also guarantees an unconstrained solution $(N^{\mathrm{base}},\pi^{\mathrm{base}})$, and the same argument gives a strictly positive denominator. Therefore $P_{\mathrm{approx}}(\hat{o}\mid N^g,s_0)>0$, and thus $P_{\mathrm{approx}}(N^g\mid \hat{o},s_0)>0$.

\smallskip
\noindent\emph{($\Leftarrow$)} If $P_{\mathrm{approx}}(N^g\mid \hat{o},s_0)>0$, then $P_{\mathrm{approx}}(\hat{o}\mid N^g,s_0)>0$, so the numerator above must be strictly positive. Positivity of $\tilde P(\hat{o},\pi^{+},N^{+}\mid N^g,s_0)$ implies that $(N^{+},\pi^{+})$ is an executable solution returned by the sound planner and that $P(\hat{o}\mid\pi^{+})>0$. By the observation model, $P(\hat{o}\mid\pi^{+})>0$ holds only when $\hat{o}$ is a subsequence of $\pi^{+}$. Hence $(N^{+},\pi^{+})$ is a solution to $\langle \mathcal{D}, s_0, N^{g}\rangle$ that embeds $\hat{o}$, establishing the existence of the required $(N,\pi)$. \end{proof}

Intuitively, for a given hypothesis $N^g$, observing an additional action that is not goal-directed to that hypothesis should decrease the likelihood of the observation sequence. The following proposition formalizes this monotonicity.

\begin{proposition}[Likelihood monotonicity under an additional observation]
    Fix $\mathcal{I}_{\mathrm{PHGR}}=\langle \mathcal{D}, s_0, G, \hat{o}, \mathrm{Prior}\rangle$ and a hypothesis $N^g$ with $\mathrm{Prior}(N^g)>0$.
Let $\hat{o}'$ be obtained from $\hat{o}$ by inserting one additional primitive symbol $a\in\Sigma$ while preserving the relative order of the symbols in $\hat{o}$.
Then
\[
P_{\mathrm{approx}}(\hat{o}'\mid N^g,s_0)\ <\ P_{\mathrm{approx}}(\hat{o}\mid N^g,s_0).
\]
\end{proposition}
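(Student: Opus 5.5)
The plan is to compare the two approximate likelihoods term by term. They share the denominator and differ only in the numerator. The unconstrained term $\tilde P(\pi^{\mathrm{base}},N^{\mathrm{base}}\mid N^g,s_0)$ is computed without Stage~III and is independent of the observations. Hence it suffices to show
\[
\tilde P(\hat{o}',\pi'^{+},N'^{+}\mid N^g,s_0)\;<\;\tilde P(\hat{o},\pi^{+},N^{+}\mid N^g,s_0),
\]
where $(N'^{+},\pi'^{+})$ and $(N^{+},\pi^{+})$ are the observation-consistent representatives for $\hat{o}'$ and $\hat{o}$. If the left side is zero and the right side positive, we are done. If both are zero, the claim degenerates to $0<0$, so I would assume (as the previous proposition allows, using task insertion) that $\hat{o}'$ is explainable. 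That makes both sides positive.

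The argument runs through the argmax. Take the maximizer $(N'^{+},\pi'^{+})$ for $\hat{o}'$. Since $\hat{o}$ is a subsequence of $\hat{o}'$, any embedding of $\hat{o}'$ into a prefix of $\pi'^{+}$ restricts to an embedding of $\hat{o}$ into the same prefix. So $(N'^{+},\pi'^{+})$ is also a feasible candidate for $\hat{o}$. Stages I and II are identical for both observation sequences on this fixed pair, so the comparison reduces to Stage~III:
\[
P(\hat{o}'\mid \pi'^{+}) \;<\; P(\hat{o}\mid \pi'^{+}).
\]
Chaining this with the maximality of $(N^{+},\pi^{+})$ gives
\[
\tilde P(\hat{o}',\pi'^{+},N'^{+})<\tilde P(\hat{o},\pi'^{+},N'^{+})\le \tilde P(\hat{o},\pi^{+},N^{+}).
\]

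The core step is the strict Stage~III inequality, argued prefix by prefix. Under full observability, the alignment likelihood is $\mathbf{1}[\hat{o}=\pi_{1:|\hat{o}|}]$. Then $\hat{o}'$ forces the executed length to be $|\hat{o}|+1$ rather than $|\hat{o}|$, so the claim becomes a comparison of progress-prior masses. This needs the progress prior to be nonincreasing in $t$, which is the ``reasonable progress prior'' assumption already invoked in Stage~III. Under partial observability with per-step detection probability $d\in(0,1)$, I would write the alignment likelihood as a sum over monotone embeddings of $d^{m}(1-d)^{t-m}$-type weights. The extra symbol turns one undetected step into a detected one, multiplying each embedding's weight by roughly $d/(1-d)$. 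Each embedding of $\hat{o}'$ maps injectively, by deleting the image of $a$, to an embedding of $\hat{o}$ whose weight is larger by that factor.

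I expect this Stage~III inequality to be the main obstacle. First, the injection claim is delicate: several embeddings of $\hat{o}'$ can collapse to the same embedding of $\hat{o}$, so the map is not obviously injective. Second, the factor is below one only when $d<1/2$; for larger $d$ the monotonicity can fail. I would therefore first try to state and use explicit conditions on the observation model: a nonincreasing progress prior, and the DP normalized per observed symbol so that each additional matched symbol contributes a factor strictly below one. Under such conditions the strict inequality holds in every prefix term, and summing over $t$ preserves strictness whenever at least one term is positive, which the first step guarantees.

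A residual concern is that the representatives actually come from a cost-minimizing planner rather than a true argmax. In that case the chain through maximality must instead compare the planner outputs directly. I would state the result for the exact argmax representatives defined in Section~\ref{gr} and note this caveat.
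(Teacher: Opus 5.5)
Your chain places all of the strictness in the common-plan comparison $P(\hat{o}'\mid\pi'^{+})<P(\hat{o}\mid\pi'^{+})$, and that step fails under the paper's observation model.

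\emph{Full observability.} Here the alignment term is a prefix-equality test, not a subsequence test. Unless $a$ is the last symbol of $\hat{o}'$, you get $\pi'^{+}_{1:|\hat{o}|}\neq\hat{o}$, so $(N'^{+},\pi'^{+})$ is not an $\hat{o}$-candidate at all; restricting embeddings only works for the partial-observability alignment. When $a$ is appended at the end, the paper's progress prior $1/(|\pi|+1)$ is constant in $t$. Both Stage~III terms then equal $1/(|\pi'^{+}|+1)$, so you get equality, not $<$.

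\emph{Partial observability.} The non-injectivity you flagged is a real obstruction. On the execution $a\,a\,a\,a\,x$ with $\hat{o}=\langle x\rangle$ and $\hat{o}'=\langle a,x\rangle$, the alignment masses are $d(1-d)^4$ and $4d^2(1-d)^3$. Their ratio $4d/(1-d)$ exceeds $1$ once $d>1/5$. The admissible $d$ shrinks with plan length, so $d<1/2$ does not rescue your Stage~III lemma. A per-symbol renormalization would change the model rather than prove the statement.

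What is missing is the premise the paper's sketch actually uses. The extra observation is realized by task insertion, so the $\hat{o}'$-representative keeps the decomposition $N^{+}$ (Stage~I unchanged), and its execution is $\pi^{+}$ with one additional inserted action. The decrease then comes from plan length, not from aligning one more symbol on a fixed plan. The inserted action contributes an extra Stage~II factor $1/|A_t|$ and may enlarge earlier available sets. The longer plan also receives less progress-prior mass, e.g.\ $1/(|\pi^{+}|+2)$ instead of $1/(|\pi^{+}|+1)$.

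Without such a premise the statement fails outright. Suppose $N^g$ decomposes only into $a\prec x$, and $x$ needs a fact that only $a$ adds. Then under full observability $\hat{o}=\langle x\rangle$ has likelihood $0$, while $\hat{o}'=\langle a,x\rangle$ is explained by $a\,x$.

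Your direction is the sound one for comparing maxima: start from the $\hat{o}'$-maximizer and produce an $\hat{o}$-candidate. The strictness, however, has to sit in your second (maximality) inequality. Delete the inserted occurrence of $a$ from $\pi'^{+}$, and show that the shorter execution is still executable, still explains $\hat{o}$, and has strictly larger Stage~II and Stage~III terms. That needs explicit assumptions. Removing $a$ changes later states, and under partial observability you must still control extra embeddings of $\hat{o}'$; the paper's sketch takes both for granted. Comparing the two returned executions directly in this way is also how the paper sidesteps the exact maximality that, as you note, the planner-based $P_{\mathrm{approx}}$ does not provide.
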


\begin{proof}[Proof Sketch]
Under the approximation of Section~\ref{approx},
\[
P_{\mathrm{approx}}(\hat{o}\mid N^g,s_0)
=
\frac{\tilde P(\hat{o},\pi^{+},N^{+}\mid N^g,s_0)}
     {\tilde P(\pi^{\mathrm{base}},N^{\mathrm{base}}\mid N^g,s_0)}.
\]
The denominator is independent of the observation sequence, so it is identical for $\hat{o}$ and $\hat{o}'$.

For the numerator, as the additional observation is realized via task insertion in the execution and does not change the decomposition, the Stage~I factor $P(N^{+}\mid N^g)$ is unchanged.
Stage~II assigns probability by multiplying per-step selection probabilities, hence an inserted action adds at least one factor (and may reduce others by enlarging the available set) and make $P(\pi^{+}\mid N^{+},s_0)$ decreases.
For Stage~III, since $\pi{+}$ is longer, a reasonable progress prior also assigns less probability to the required executed length.
Multiplying the Stage~II and Stage~III effects yields the stated monotonicity. \end{proof}

\subsection{Exogenous Actions without Task Insertion Support}

Although HTN planning with task insertion admits a compilation to standard HTN planning in theory, earlier work reports a lack of implementation of task-insertion planner, and more recent work provides implementation only under some restrictions~\citep{bercher2019survey,pantuuvckova2025parsing}. Here we discuss what can and cannot be done when a task-insertion planner is not available.

If the planner supports task insertion but is incomplete, we can still apply the \mbox{top-$k$} hypothesis selection procedure from Section~\ref{topk} and compute a posterior by normalizing over the resulting subset of hypotheses. However, unlike Proposition~2, no guarantee can be established for this posterior.

If the available planner does \emph{not} support task insertion, then it cannot handle any hypothesis whose unconstrained explanation requires exogenous actions. In this setting, the need for exogenous actions reflects a limitation of the HTN hierarchy’s expressive capacity, rather than an artifact of goal recognition. To circumvent this limitation, we restrict attention to a class of hypotheses $N^g$ for which the unconstrained instance $\langle \mathcal{D}, s_0, N^g\rangle$ admits a solution generated purely by HTN decomposition. Under this restriction, the unconstrained component becomes tractable and can be approximated using our algorithm with the planner.

However, even under this restriction, difficulties arise when the observation sequence $\hat{o}$ contains exogenous actions (e.g., due to noise or suboptimal behaviour). The observation-enforcing compilation constrains any consistent plan to include these actions. If such an action is exogenous with respect to $N^g$ in the original instance, it remains exogenous after compilation; satisfying the constraint therefore requires task insertion. Consequently, the compiled observation-enforcing problem become unsolvable for the planner, and the observation-consistent term cannot be computed.

To summarize, without task insertion support we can evaluate only those hypotheses for which both the unconstrained and observation-consistent instances admit solutions without exogenous actions.


\section{Experiments}
We evaluated our approach on the Kitchen domain and Monroe domain introduced by \citet{holler2018plan} and compared it against their method (\emph{baseline}), which to the best of our knowledge is the only existing goal recognition approach that operates directly on HTN models.
We use the HTN planning system and PGR solver from PANDA
\citep{holler2018generic} to solve all HTN planning problems\footnote{Code and implementation details are available at \url{https://github.com/kat-s-626/probabilistic_hgr}}.
Experiments were conducted on a MacBook Pro with 10 CPUs, 16 GB of memory, and a 3-minute time limit per instance. 

\subsection{Kitchen Domain}
The kitchen domain models the preparation of multi-course meals, which includes 5 starters, 30  main dishes, and 5
 desserts. Each problem instance specifies a unique goal such as preparing a
single main, a starter–main combination, a main–dessert combination, or a full
three-course meal.
This results in more than $1000$ distinct high-level goal combinations,
many of which share long prefixes and partially ordered subtasks. Consequently, the
same initial actions are often consistent with multiple possible hypotheses.


For computing the posterior distribution over candidate goals
$P(N^{g}\mid \hat{o},s_0)$, we impose uniformity across all components of the generative process.
We assume a uniform prior over goals,
$Prior(N^{g}) = 1/|G|$ for all $N^{g} \in G$, 
a uniform progress prior over the number of executed actions
given a plan, i.e.,
$P(\text{execute } t \text{ actions}\mid\pi) = 1/(|\pi|+1)$
for all $t = 0, 1, \ldots, |\pi|$, 
and a uniform method cost function $c(m)$. 
Finally, due to resource constraints, we set $K = 5$ for \mbox{top-$k$} hypotheses selection.



In addition to our three-stage likelihood approximation, we also consider a
simplified Boltzmann distribution over plan cost, following
\citet{ramirez2010probabilistic}, with temperature parameter $\gamma = 1$:
\begin{align*}
    P(\hat{o}\mid N^g,s_0)
&\approx
\frac{\tilde P(\hat{o},\pi^{+},N^{+}\mid N^g, s_0)}
     {\tilde P(N^{\mathrm{base}},\pi^{\mathrm{base}}\mid N^g, s_0)}\\
&\approx
\frac{\exp\!\bigl(-\gamma\,c(\pi^+)\bigr)}
     {\exp\!\bigl(-\gamma\,c(\pi^\mathrm{base})\bigr)}.
\end{align*}
For goal recognition performance, we evaluate \emph{\mbox{top-$k$} accuracy} based on the posterior over hypotheses. This is distinct from the \mbox{top-$k$} procedure used to generate candidate hypotheses. In all experiments, our methods generate at most $K = 5$ candidate goals. Given this fixed candidate set, top-5 accuracy simply measures whether the true goal appears among the five generated hypotheses and therefore does not depend on how the posterior is estimated. In contrast, top-3 and top-1 accuracy depend on the ranking induced by the posterior: \mbox{top-3} checks whether the true goal lies within the three most probable hypotheses, and top-1 corresponds to the single most probable (MAP) hypothesis. Note that our top-1 solution might not be identical to the baseline, since the baseline reports the first returned hypothesis, which is not necessarily the most probable one under our framework. We report results for both formulations, and the full set of scores is shown in Table~\ref{tab:results}.

\begin{table}[ht]
    \centering
    \setlength{\tabcolsep}{6pt}
    \small

    \begin{tabular}{@{}l|c|cccc@{}}
        \toprule
        & & \multicolumn{4}{c}{\textbf{\mbox{top-$k$} Accuracy (\%)}} \\
        \cmidrule(lr){3-6}
        \textbf{Method} & \textbf{$k$}
        & \textbf{20\%} & \textbf{40\%} & \textbf{60\%} & \textbf{80\%} \\
        \midrule
        \multirow{3}{*}{Full-Three-Stage} 
          & 1 & 10.5 & 44.0 & 75.5 & 80.3 \\
          & 3 & 19.8 & 71.8 & 97.4 & 96.3 \\
          & 5 & \textbf{22.3} & \textbf{72.9} & \textbf{99.8} & \textbf{100.0} \\
        \midrule
        \multirow{2}{*}{Full-Simplified} 
          & 1 &  6.0 & 32.6 & 80.9 & 91.3 \\
          & 3 & 12.7 & 56.0 & 95.8 & 98.9 \\
        \midrule
        Baseline-Full
          & - &  6.0 & 32.6 & 80.9 & 91.5 \\
        \midrule
        \multirow{3}{*}{Partial-Three-Stage} 
          & 1 &  4.9 & 21.8 & 59.5 & 66.0 \\
          & 3 & 15.0 & 54.4 & 92.2 & 93.4 \\
          & 5 & \textbf{17.2} & \textbf{65.5} & \textbf{98.4} & \textbf{100.0} \\
        \midrule
        \multirow{2}{*}{Partial-Simplified} 
          & 1 &  3.9 & 29.3 & 75.3 & 86.2 \\
          & 3 & 11.2 & 52.4 & 94.3 & 99.0 \\
        \midrule
        Baseline-Partial
          & - &  4.0 & 22.0 & 73.0 & 86.0 \\
        \midrule
        \multicolumn{2}{l|}{Avg.\ valid hypotheses (Full)} 
          & 4.9 & 4.7 & 3.2 & 3.1 \\
        \multicolumn{2}{l|}{Avg.\ valid hypotheses (Partial)} 
          & 5.0 & 4.6 & 3.0 & 2.8 \\
        \bottomrule
    \end{tabular}
    \caption{\mbox{top-$k$} accuracy. For \mbox{top-5} accuracy, Full-Simplified and Partial-Simplified 
    coincide with Full-Three-Stage and Partial-Three-Stage, respectively. 
    The average number of valid hypotheses is identical within each setting and reported in the last two rows.}
    \label{tab:results}
\end{table}

\subsubsection{Full Observation}

\begin{figure}[ht]
    \centering
    \includegraphics[width=\columnwidth]{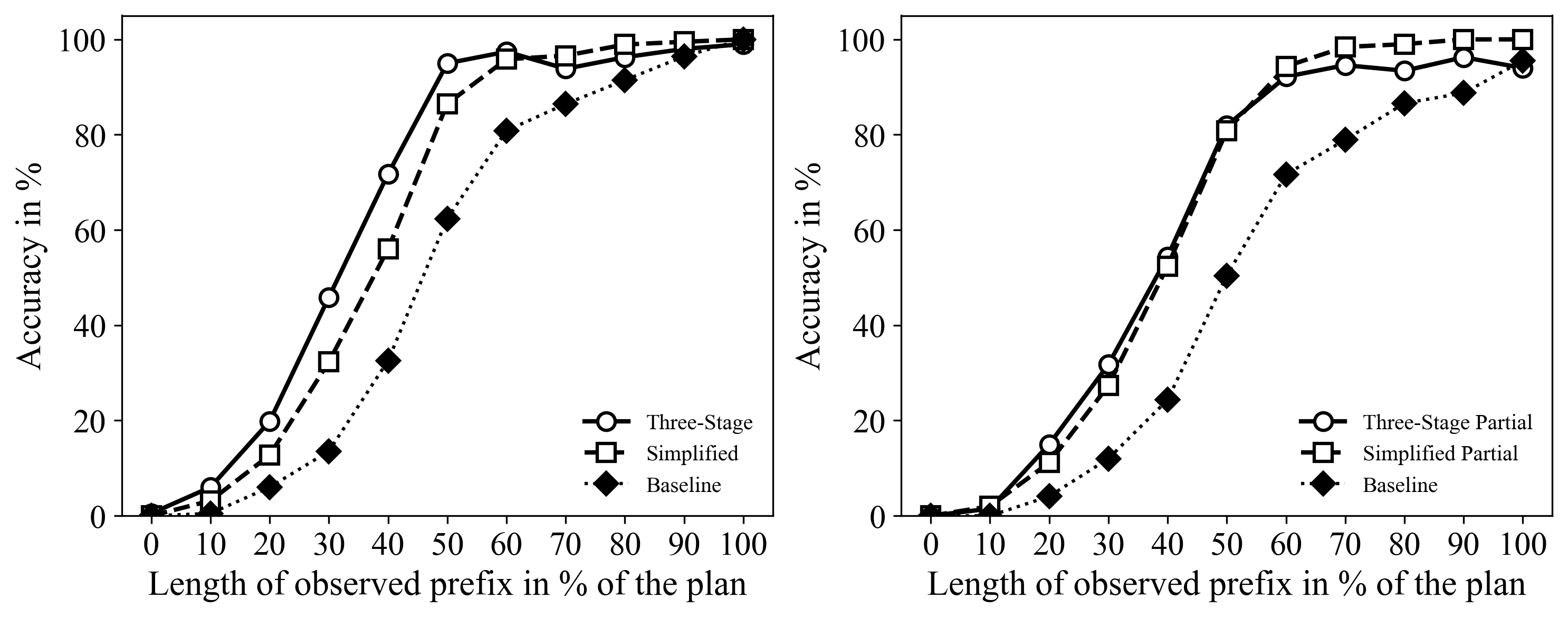}
    \caption{\mbox{top-3} accuracy for posterior estimations and baseline as a function of observation ratio: (a) full observation; (b) partial observation.}
    \label{fig:topk-full}
\end{figure}

We first consider the fully observed setting.
As shown in Table~\ref{tab:results}, both probabilistic models consistently
outperform the baseline in \mbox{top-3} and \mbox{top-5} accuracy across all observation
ratios.
This improved accuracy comes at a moderate computational cost: the
probabilistic models take on average 23.9 seconds per instance, compared to
5.1 seconds for the baseline.
Given the substantial gains in \mbox{top-$k$} accuracy, this additional computation is
well justified for applications that require reliable goal recognition.

For shorter observation prefixes (20\% and 40\%), the three-stage estimator not
only achieves higher top-1 accuracy than the baseline, but also attains higher
\mbox{top-3} and \mbox{top-5} accuracy than the simplified estimator
(Fig.~\ref{fig:topk-full}(a)).

Note that the benchmark traces reflect cost-optimal behaviour induced by the underlying planners used to generate them. Our generative process does not assume strict cost optimality, which introduces a mild model–data mismatch that, if anything, advantages the cost-based baseline and the simplified probabilistic model. Despite this bias, our framework still yields better recognition performance in the early stages, which indicates its strong robustness under uncertainty. Once the observation ratio reaches 60\% and above, the simplified model that
matches the actor’s underlying behaviour begins to overtake the three-stage
estimator.

Finally, the average number of valid hypotheses decreases as more observations
are revealed (i.e., more constraints are imposed) but remains well above
1 (3.1) in the Kitchen domain, reinforcing the need for a
probabilistic treatment rather than committing to a single hypothesis.

\subsubsection{Partial Observation}

We also evaluate our algorithm in a partial-observation setting.
Following \citet{holler2018plan}, partial observation sequences are generated by
randomly removing 20\% of the actions from the full observation trace.

In the partially observed setting, we see qualitatively similar trends to the
fully observed case.
The probabilistic models still offer strong improvements in \mbox{top-$k$} accuracy
over the baseline, and the three-stage estimator remains competitive when
observations are very limited (20\%).
As the observation ratio increases, the advantage of the simplified
model with the correct underlying behaviour emerges earlier, as shown in Table~\ref{tab:results} and
Fig.~\ref{fig:topk-full}(b).
This earlier crossover might attribute to the higher variance introduced by
partial observability, which reduces the robustness margin of the misspecified
three-stage model.

\subsection{Monroe Domain}
To assess generalizability beyond the Kitchen domain, we additionally evaluate on the \textsc{Monroe} domain introduced by \citet{blaylock2005recognizing}, a standard benchmark for plan recognition. The dataset consists of automatically generated plan sessions produced with a modified version of SHOP2. Goals are sampled stochastically, corresponding initial states are generated, and a valid plan is then synthesized for each session. The domain contains 10 lifted goals, 46 lifted methods, and 30 lifted actions, with an average plan length of 9.6 actions per session.
Some problem instances in the domain take more than 500 seconds to search for the goal across all observation prefixes for the baseline algorithm. To ensure computational feasibility, we filtered out these instances and tested our algorithm on the remaining problems.

The largest performance differences appear in the early-observation regime (0--20\% observed), where our approach achieves substantially higher \mbox{top-3} accuracy than the baseline (65\% vs.\ 30\% at 10\% observation; 90\% vs.\ 65\% at 20\% observation). Both methods reach 100\% accuracy once 30\% of the plan is observed, reflecting that most goal hypotheses diverge after only a few initial actions and thus share little prefix structure in the Monroe domain.



\subsection{Exogenous Actions}

As we do not have access to an HTN planner implementation that supports task insertion, we
cannot directly instantiate the ideal setting assumed in our theoretical
analysis.
Instead, we perform a sanity check to highlight the advantage of our
probabilistic approach over the baseline.

To this end, we modify a single Kitchen instance to probe how the
baseline of \citet{holler2018plan} behaves in the presence of exogenous
actions.
In the original instance, the ground truth hypothesis is to prepare spaghetti
carbonara, with goal network
\(\{\textit{makeNoodles}(\texttt{spaghetti},\texttt{p1}),
  \textit{makeCarbonara}\}\).
Under the clean observation trace, the baseline correctly identifies this goal
network.

We then inject a single exogenous action
\(\textit{add}(\texttt{milk},\texttt{p2})\), while keeping all other actions
unchanged.
On this modified trace, the recogniser no longer returns the original goal
set.
Instead, it infers the goal network
\(\textit{makeNoodles}(\texttt{spaghetti},\texttt{p1}),
\textit{makeVanillaPudding}(\texttt{p2})\), \(\textit{makeCarbonara},\)
and produces a plan that interleaves carbonara and pudding preparation.
A single exogenous action is thus “explained’’ by adding an unintended task.

In a second variant, we instead insert
\(\textit{add}(\texttt{spaghetti},\texttt{p3})\) at the corresponding point
in the trace.
Here, the observation-enforcing compilation becomes unsolvable: no HTN plan
achieves the original goals while containing the extra \(\texttt{p3}\) action
at the specified position.
Given the assumption that all observed actions must be generated by the HTN
domain and that the planner cannot perform task insertion, this single
exogenous action causes the recogniser to fail outright.

These two simple modifications highlight a fragility of the existing HTN-based recognizer in the presence of exogenous actions.
Either the system is forced to add additional unintended tasks to
explain the extra actions, or the compiled problem becomes unsolvable under
a planner without task insertion.
In our framework, exogenous actions are handled in Stage~II, and the ground truth hypothesis still receives
non-zero posterior probability.

\section{Related Work}


The main backbone of this paper, namely compiling HTN goal recognition into HTN planning, comes from \citet{holler2018plan}. Their compilation has also been
used in real world settings~\citep{jamakatel2023towards}. This approach can be seen as an
HTN analogue of earlier work on plan recognition as planning for classical
domains~\citep{ramirez2009plan}.
However, these methods are deterministic and can only check whether
hypotheses are compatible with the observations without ranking them. Moreover, as we discuss in Section~3.4, \citet{holler2018plan} also exhibits unexpected behaviour that the system does not prefer less surprising plans
when several hypotheses can explain the observations.

Another important line of work approaches goal recognition from a probabilistic
perspective, going back more than 30 years~\citep{charniak1993bayesian}.
In the 2010s, both the cognitive science community~\citep{baker2009action}
and the planning community~\citep{ramirez2010probabilistic} proposed Bayesian
formulations of goal recognition, estimating likelihoods via plan costs.
This cost-based idea has been widely adopted in subsequent work
\citep{vered2017heuristic,kaminka2018plan,masters2019cost, fitzpatrick2021behaviour, zhang2023goal}.
These approaches are typically defined over non-hierarchical models.
In contrast, our framework provides a more expressive probabilistic
model over a hierarchical structure.

A third strand of work performs goal recognition using \emph{plan libraries}.
These systems reason over a hierarchical library of plans and maintain probabilistic beliefs over goals, as in PHATT~\citep{geib2009probabilistic} and SLIM~\citep{mirsky2017slim}. They provide an explicit probabilistic treatment of hierarchical structure, and their generative views of plan execution informed our own probabilistic formulation. 
However, this line of work has received less attention in recent years. Plan-library methods typically require explicit or implicit enumeration of library plans, and lack full \emph{state-based} action semantics (preconditions and effects), limiting their applicability in many domains. Finally, like the deterministic HTN recognizer of \citet{holler2018plan}, these approaches do not account for \emph{exogenous} actions in the observation trace.

Our work can be viewed as bringing these lines of work together in a
coherent way.
We build a Bayesian inference framework that adopts the generative
probabilistic view of plan-library approaches, and we utilize the observation-enforcing compilation
of \citet{holler2018plan} to approximate the inference problem, while
explicitly modeling exogenous actions within this framework and providing
theoretical guarantees for this setting.

\section{Discussion and Conclusion}
We proposed the first planning-based probabilistic framework for hierarchical goal recognition that
performs Bayesian inference over HTN structures. Our three-stage generative process and likelihood
approximation (via an unconstrained exeuction and an observation-consistent execution)
yields a practical \mbox{top-$k$} hypothesis selection procedure that can be
implemented with an off-the-shelf HTN planner.

With a sound and complete HTN planner that supports task insertion, our
framework provide posterior support for both the exact probabilistic model and our approximation algorithm. Furthermore, the
approximated likelihood is monotonic in the number of exogenous actions: for a
fixed hypothesis it decreases as additional exogenous actions are added to the
observations. Empirical results on the Kitchen and Monroe domains show that the probabilistic framework substantially improves goal recognition performance over the existing HTN-based recognizer under both full and partial observations, at the cost of a moderate constant-factor increase in runtime (i,e., from 5.1 to 23.9 seconds per instance on average in the Kitchen domain, top-5 setting). Despite the mismatch between the generative process (which
encodes our assumptions on observation generation) and the cost-optimal behaviour in the
benchmark, the framework still performs well. The sanity check with injected actions further suggests that the framework is more tolerant to exogenous behaviour than the prior recognizer.

Taken together, our work provides the first planning-based template for integrating hierarchical structure with probabilistic inference, moving goal recognition toward more practical and expressive settings. Promising directions for future work include: (i) developing HTN planners that directly optimize execution probability under the assumed generative process (rather than plan cost), (ii) developing general-purpose HTN planners with full support for task insertion, and (iii) incorporating insights from human hierarchical planning to refine our generative model and better recognize human behaviour.

\section*{Acknowledgments}
This work is supported by the DARPA Assured Neuro Symbolic Learning and Reasoning (ANSR) program under award number FA8750-23-2-1016.

\section*{AI Declaration}

The authors used AI tools to polish the writing of the manuscript and to assist with debugging the code. These tools were used only for editorial and implementation support; all research ideas, methods, experiments, analyses, and conclusions were developed and verified by the authors.

\bibliographystyle{kr}
\bibliography{kr-sample}

@inproceedings{blaylock2005recognizing,
  author    = {Blaylock, Nate and Allen, James},
  title     = {Recognizing Instantiated Goals Using Statistical Methods},
  booktitle = {IJCAI Workshop on Modeling Others from Observations (MOO-2005)},
  pages     = {79--86},
  year      = {2005}
}

@inproceedings{pantuuvckova2025parsing,
  author    = {Pant{\r{u}}{\v{c}}kov{\'a}, Krist{\'y}na and Bart{\'a}k, Roman},
  title     = {Parsing-Based Planner for Totally Ordered {HTN} Planning with Task Insertion},
  booktitle = {2025 IEEE 37th International Conference on Tools with Artificial Intelligence (ICTAI)},
  pages     = {483--490},
  year      = {2025},
  publisher = {IEEE}
}

@inproceedings{ramirez2011goal,
  author    = {Ram{\'i}rez, Miquel and Geffner, Hector},
  title     = {Goal Recognition over {POMDP}s: Inferring the Intention of a {POMDP} Agent},
  booktitle = {Proceedings of the Twenty-Second International Joint Conference on Artificial Intelligence (IJCAI)},
  pages     = {2009--2014},
  year      = {2011}
}

@inproceedings{kaminka2018plan,
  author    = {Kaminka, Gal A. and Vered, Mor and Agmon, Noa},
  title     = {Plan Recognition in Continuous Domains},
  booktitle = {Proceedings of the AAAI Conference on Artificial Intelligence},
  volume    = {32},
  number    = {1},
  year      = {2018}
}

@article{fitzpatrick2021behaviour,
  author    = {Fitzpatrick, Grady and Lipovetzky, Nir and Papasimeon, Michael and Ram{\'i}rez, Miquel and Vered, Mor},
  title     = {Behaviour Recognition with Kinodynamic Planning over Continuous Domains},
  journal   = {Frontiers in Artificial Intelligence},
  volume    = {4},
  pages     = {717003},
  year      = {2021},
  publisher = {Frontiers Media SA}
}

@article{correa2025exploring,
  author    = {Correa, Carlos G. and Sanborn, Sophia and Ho, Mark K. and Callaway, Frederick and Daw, Nathaniel D. and Griffiths, Thomas L.},
  title     = {Exploring the Hierarchical Structure of Human Plans via Program Generation},
  journal   = {Cognition},
  volume    = {255},
  pages     = {105990},
  year      = {2025},
  publisher = {Elsevier}
}

@inproceedings{holler2018plan,
  author    = {H{\"o}ller, Daniel and Behnke, Gregor and Bercher, Pascal and Biundo, Susanne},
  title     = {Plan and Goal Recognition as {HTN} Planning},
  booktitle = {2018 IEEE 30th International Conference on Tools with Artificial Intelligence (ICTAI)},
  pages     = {466--473},
  year      = {2018},
  publisher = {IEEE}
}

@inproceedings{ramirez2010probabilistic,
  author    = {Ram{\'i}rez, Miquel and Geffner, Hector},
  title     = {Probabilistic Plan Recognition Using Off-the-Shelf Classical Planners},
  booktitle = {Proceedings of the AAAI Conference on Artificial Intelligence},
  volume    = {24},
  number    = {1},
  pages     = {1121--1126},
  year      = {2010}
}

@inproceedings{ramirez2009plan,
  author    = {Ram{\'i}rez, Miquel and Geffner, Hector},
  title     = {Plan Recognition as Planning},
  booktitle = {Proceedings of the 21st International Joint Conference on Artificial Intelligence (IJCAI)},
  pages     = {1778--1783},
  year      = {2009}
}

@article{geib2009probabilistic,
  author    = {Geib, Christopher W. and Goldman, Robert P.},
  title     = {A Probabilistic Plan Recognition Algorithm Based on Plan Tree Grammars},
  journal   = {Artificial Intelligence},
  volume    = {173},
  number    = {11},
  pages     = {1101--1132},
  year      = {2009},
  publisher = {Elsevier}
}

@inproceedings{yousefi2025good,
  author    = {Yousefi, Mohammad and Schmautz, Mario and Haslum, Patrik and Bercher, Pascal},
  title     = {How Good Is Perfect? On the Incompleteness of {A*} for Total-Order {HTN} Planning},
  booktitle = {Proceedings of the International Conference on Automated Planning and Scheduling},
  volume    = {35},
  number    = {1},
  pages     = {112--120},
  year      = {2025},
  doi       = {10.1609/icaps.v35i1.36107}
}

@inproceedings{bercher2019survey,
  author    = {Bercher, Pascal and Alford, Ron and H{\"o}ller, Daniel},
  title     = {A Survey on Hierarchical Planning---One Abstract Idea, Many Concrete Realizations},
  booktitle = {Proceedings of the Twenty-Eighth International Joint Conference on Artificial Intelligence (IJCAI)},
  pages     = {6267--6275},
  year      = {2019}
}

@inproceedings{geier2011decidability,
  author    = {Geier, Thomas and Bercher, Pascal},
  title     = {On the Decidability of {HTN} Planning with Task Insertion},
  booktitle = {Proceedings of the Twenty-Second International Joint Conference on Artificial Intelligence (IJCAI)},
  volume    = {22},
  number    = {3},
  pages     = {1955},
  year      = {2011}
}

@inproceedings{xiao2017hierarchical,
  author    = {Xiao, Zhanhao and Herzig, Andreas and Perrussel, Laurent and Wan, Hai and Su, Xiaoheng},
  title     = {Hierarchical Task Network Planning with Task Insertion and State Constraints},
  booktitle = {Proceedings of the Twenty-Sixth International Joint Conference on Artificial Intelligence (IJCAI 2017)},
  pages     = {4463--4469},
  year      = {2017}
}

@inproceedings{zhang2024human,
  author    = {Zhang, Chenyuan and Kemp, Charles and Lipovetzky, Nir},
  title     = {Human Goal Recognition as Bayesian Inference: Investigating the Impact of Actions, Timing, and Goal Solvability},
  booktitle = {Proceedings of the 23rd International Conference on Autonomous Agents and Multiagent Systems},
  pages     = {2066--2074},
  year      = {2024}
}

@inproceedings{zhang2025probabilistic,
  author    = {Zhang, Chenyuan and Cardenas, Cristian Rojas and Rezatofighi, Hamid and Vered, Mor and Say, Buser},
  title     = {Probabilistic Active Goal Recognition},
  booktitle = {Proceedings of the International Conference on Principles of Knowledge Representation and Reasoning},
  volume    = {22},
  number    = {1},
  pages     = {880--890},
  year      = {2025}
}

@inproceedings{vered2016online,
  author    = {Vered, Mor and Kaminka, Gal A. and Biham, Sivan},
  title     = {Online Goal Recognition through Mirroring: Humans and Agents},
  booktitle = {Annual Conference on Advances in Cognitive Systems},
  year      = {2016},
  publisher = {Cognitive Systems Foundation}
}

@article{baker2009action,
  author    = {Baker, Chris L. and Saxe, Rebecca and Tenenbaum, Joshua B.},
  title     = {Action Understanding as Inverse Planning},
  journal   = {Cognition},
  volume    = {113},
  number    = {3},
  pages     = {329--349},
  year      = {2009},
  publisher = {Elsevier}
}

@inproceedings{mirsky2017slim,
  title={SLIM: semi-lazy inference mechanism for plan recognition},
  author={Mirsky, Reuth and Gal, Ya'akov},
  booktitle={Proceedings of the Twenty-Fifth International Joint Conference on Artificial Intelligence},
  pages={394--400},
  year={2016}
}

@article{masters2019cost,
  author  = {Masters, Peta and Sardina, Sebastian},
  title   = {Cost-Based Goal Recognition in Navigational Domains},
  journal = {Journal of Artificial Intelligence Research},
  volume  = {64},
  pages   = {197--242},
  year    = {2019}
}

@inproceedings{zhang2023goal,
  author    = {Zhang, Chenyuan and Kemp, Charles and Lipovetzky, Nir},
  title     = {Goal Recognition with Timing Information},
  booktitle = {Proceedings of the International Conference on Automated Planning and Scheduling},
  volume    = {33},
  pages     = {443--451},
  year      = {2023}
}

@article{charniak1993bayesian,
  author    = {Charniak, Eugene and Goldman, Robert P.},
  title     = {A Bayesian Model of Plan Recognition},
  journal   = {Artificial Intelligence},
  volume    = {64},
  number    = {1},
  pages     = {53--79},
  year      = {1993},
  publisher = {Elsevier}
}

@inproceedings{masters2021s,
  author    = {Masters, Peta and Vered, Mor},
  title     = {What's the Context? Implicit and Explicit Assumptions in Model-Based Goal Recognition},
  booktitle = {Proceedings of the Thirtieth International Joint Conference on Artificial Intelligence (IJCAI 2021)},
  pages     = {4516--4523},
  year      = {2021},
  publisher = {Association for the Advancement of Artificial Intelligence}
}

@inproceedings{jamakatel2023towards,
  author    = {Jamakatel, Prakash and Bercher, Pascal and Schulte, Axel and Kiam, Jane Jean},
  title     = {Towards Intelligent Companion Systems in General Aviation Using Hierarchical Plan and Goal Recognition},
  booktitle = {Proceedings of the 11th International Conference on Human-Agent Interaction},
  pages     = {229--237},
  year      = {2023}
}

@inproceedings{vered2017heuristic,
  title={Heuristic online goal recognition in continuous domains},
  author={Vered, Mor and Kaminka, Gal A},
  booktitle={International Joint Conference on Artificial Intelligence 2017},
  pages={4447--4454},
  year={2017},
  organization={Association for the Advancement of Artificial Intelligence (AAAI)}
}

@inproceedings{holler2018generic,
  author    = {H{\"o}ller, Daniel and Bercher, Pascal and Behnke, Gregor and Biundo, Susanne},
  title     = {A Generic Method to Guide {HTN} Progression Search with Classical Heuristics},
  booktitle = {Proceedings of the International Conference on Automated Planning and Scheduling},
  volume    = {28},
  pages     = {114--122},
  year      = {2018}
}

\end{document}